\newcolumntype{L}[1]{>{\raggedright\arraybackslash}p{#1}}
\newcolumntype{C}[1]{>{\centering\arraybackslash}p{#1}}
\newcolumntype{R}[1]{>{\raggedleft\arraybackslash}p{#1}}
\theoremstyle{plain} 
\newtheorem{proposition}{Proposition}
\newtheorem{theorem}{Theorem}
\newtheorem{assumption}{Assumption}
\def\defn{\,\coloneqq\,}
\def\argmin{\mathop{\mathsf{arg\,min}}} 
\def\lim{\mathop{\mathsf{lim}}} 
\def\min{\mathop{\mathsf{min}}} 
\def\max{\mathop{\mathsf{max}}}
\def\prox{\mathsf{prox}}
\def\log{\mathsf{log}}
\def\zer{\mathsf{zer}}
\def\fix{\mathsf{fix}}
\def\ebm{{\bm{e}}}
\def\sbm{{\bm{s}}}
\def\xbm{{\bm{x}}}
\def\ybm{{\bm{y}}}
\def\zbm{{\bm{z}}}
\def\zerobm{\bm{0}}
\def\Abm{{\bm{A}}}
\def\Hbm{{\bm{H}}}
\def\Dbm{{\bm{D}}}
\def\xbmast{{\bm{x}^\ast}}
\def\xbmhat{{\widehat{\bm{x}}}}
\def\Psfhat{{\widehat{\Psf}}}
\def\Gsfhat{{\widehat{\Gsf}}}
\def\Psfhat{{\widehat{\Psf}}}
\def\nablahat{{\widehat{\nabla}}}
\def\Tsf{{\mathsf{T}}}
\def\Dsf{{\mathsf{D}}}
\def\Hsf{{\mathsf{H}}}
\def\Gsf{{\mathsf{G}}}
\def\Isf{{\mathsf{I}}}
\def\Psf{{\mathsf{P}}}
\def\Usf{{\mathsf{U}}}
\def\Hsf{{\mathsf{H}}}
\def\C{\mathbb{C}}
\def\R{\mathbb{R}}
\def\E{\mathbb{E}}
\ificcvfinal\pagestyle{empty}\fi
\begin{document}

\title{Online Regularization by Denoising with Applications to Phase Retrieval}

\author{Zihui Wu \quad Yu Sun \quad Jiaming Liu \quad Ulugbek S. Kamilov\\
Washington University in St. Louis\\
{\small \tt \{ray.wu, sun.yu, jiaming.liu, kamilov\}@wustl.edu} \\
{\small \tt \textcolor{RubineRed}{https://cigroup.wustl.edu}}
}

\maketitle

\begin{abstract}

  Regularization by denoising (RED) is a powerful framework for solving imaging inverse problems. Most RED algorithms are iterative batch procedures, which limits their applicability to very large datasets. In this paper, we address this limitation by introducing a novel online RED (On-RED) algorithm, which processes a small subset of the data at a time. We establish the theoretical convergence of $\text{On-RED}$ in convex settings and empirically discuss its effectiveness in non-convex ones by illustrating its applicability to phase retrieval. Our results suggest that On-RED is an effective alternative to the traditional RED algorithms when dealing with large datasets.
\end{abstract}

\section{Introduction}
\label{Sec:Introduction}

The recovery of an unknown image $\xbm \in \R^n$ from a set of noisy measurement is crucial in many applications, including computational microscopy \cite{Tian.Waller2015}, astronomical imaging \cite{Starck.etal2002}, and phase retrieval \cite{Candes.etal2012}. The problem is usually formulated as a regularized optimization
\begin{equation}
\label{Eq:RegularizedOptimization}
\xbmhat = \argmin_{\xbm \in \R^N} \left\{f(\xbm)\right\} \quad\text{with}\quad f(\xbm) = g(\xbm) + h(\xbm),
\end{equation}
where $g$ is the data-fidelity term that ensures the consistency with the measurements, and $h$ is the regularizer that imposes the prior knowledge on the unknown image. Popular methods for solving such optimization problems include the family of proximal methods, such as proximal gradient method (PGM)~\cite{Figueiredo.Nowak2003, Daubechies.etal2004, Bect.etal2004, Beck.Teboulle2009a} and alternating direction method of multipliers (ADMM)~\cite{Eckstein.Bertsekas1992, Afonso.etal2010, Ng.etal2010, Boyd.etal2011}, due to their compatibility with non-differentiable regularizers~\cite{Rudin.etal1992, Figueiredo.Nowak2001, Elad.Aharon2006}.

\begin{figure}[t]
\centering\includegraphics[width=\linewidth]{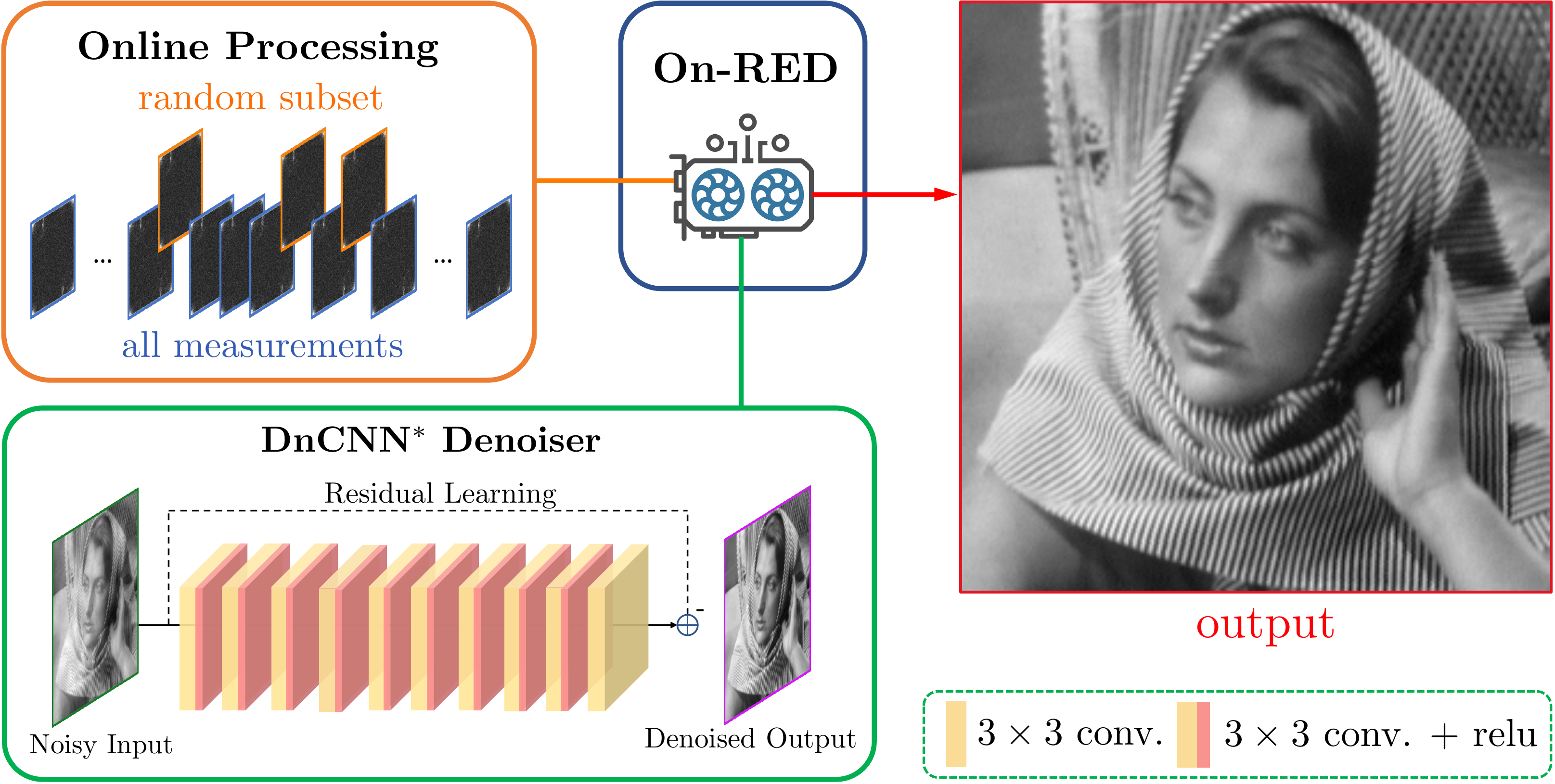}
\caption{Conceptual illustration of \emph{online regularization by denoising (On-RED)}. The proposed algorithm uses a \emph{random subset of noisy measurements} at every iteration to reconstruct a high-quality image using a \emph{convolutional neural network (CNN)} denoser.}
\label{fig:Schema}
\end{figure}

Recent work has demonstrated the benefit of using denoisers as priors for solving imaging inverse problems \cite{Sreehari.etal2016, Chan.etal2016, Brifman.etal2016, Teodoro.etal2016, Zhang.etal2017a, Meinhardt.etal2017, Kamilov.etal2017, Sun.etal2018a, Sun.etal2018b, Metzler.etal2018}. 
One popular framework, known as \emph{plug-and-play priors (PnP)} \cite{Venkatakrishnan.etal2013}, extends traditional proximal methods by replacing the proximal operator with a general denoising function. This grants PnP a remarkable flexibility in choosing image priors, but also complicates its analysis due to the lack of an explicit objective function.

An alternative strategy for leveraging denoisers is the \emph{regularization by denoising (RED)} framework~\cite{Romano.etal2017}, which formulates an explicit regularizer $h$ for certain classes of denoisers~\cite{Romano.etal2017, Reehorst.Schniter2019}. Recent work has shown the effectiveness of RED under sophisticated denoisers for many different image reconstruction tasks~\cite{Romano.etal2017,Metzler.etal2018, Reehorst.Schniter2019, Sun.etal2019a}. For example, Metzler \emph{et al.}~\cite{Metzler.etal2018} demonstrated the state-of-the-art performance of RED for phase retrieval by using the DnCNN denoiser~\cite{Zhang.etal2017}. 

Typical PnP and RED algorithms are iterative \emph{batch} procedures, which means that they processes the entire set of measurements at every iteration. This type of batch processing of data is known to be inefficient when dealing with large datasets~\cite{Bottou.Bousquet2007,Kim.etal2013}. Recently, an online variant of PnP~\cite{Sun.etal2018a} has been proposed to address this problem, yet such an algorithm is still missing for the RED framework. 

In order to address this gap, we propose an \emph{online} extension of RED, called \emph{online regularization by denoising (On-RED)}. Unlike its batch counterparts, On-RED adopts online processing of data by using only a random subset of measurements at a time (see Figure~\ref{fig:Schema} for a conceptual illustration). This empowers the proposed method to effectively scale to datasets that are too large for batch processing. Moreover, On-RED can fully leverage the flexibility offered by deep learning by using \emph{convolutional neural network (CNN)} denoisers.

The key contributions of this paper are as follows:
\begin{itemize}
\item We propose a novel On-RED algorithm for online processing of measurements. We provide the theoretical convergence analysis of the algorithm under several transparent assumptions. In particular, given a convex $g$ and nonexpansive denoiser, which does not necessarily correspond to any explicit $h$, our analysis shows that On-RED converges to a fixed point at the worst-case rate of $O(1/\sqrt{t})$.

\item We validate the effectiveness of On-RED for phase retrieval from \emph{Coded Diffraction Patterns} (CDP)~\cite{Candes.etal2012} under a CNN denoiser. Numerical results demonstrate the empirical fixed-point convergence of On-RED in this non-convex setting and show its potential for processing large datasets under nonconvex $g$.

\end{itemize}

\section{Background}
\label{Sec:Background}
In this section, we first review the problem of regularized image reconstruction and then introduce some related work.

\subsection{Inverse Problems in Imaging}

Consider the inverse problem of recovering $\xbm \in \R^n$ from measurements $\ybm \in \R^m$ specified by the linear system
\begin{equation}
\label{Eq:MeasurementSystem}
\ybm = \Hbm\xbm + \ebm,
\end{equation}
where the measurement matrix $\Hbm \in \R^{m\times n}$ characterizes the response of the system, and $\ebm$ is usually assumed to be additive white Gaussian noise (AWGN). When the inverse problem is nonlinear, the measurement operator can be generalized to a mapping $\Hbm: \R^n \rightarrow \R^m$. A common example is the problem of \emph{phase retrieval (PR)}, which corresponds the following nonlinear system
\begin{equation}
\label{Eq:PhaseRetrieval}
\ybm = \Hbm(\xbm) + \ebm, \quad\text{with}\quad \Hbm(\xbm) = |\Abm \xbm|
\end{equation} 
where $|\cdot|$ denotes an element-wise absolute value, and ${\Abm \in \C^{m \times n}}$ is the measurement matrix.

Due to the ill-posedness, inverse problems are often formulated as \eqref{Eq:RegularizedOptimization}. A widely-used data-fidelity term is the least-square loss
\begin{equation}
\label{Eq:L2norm}
g(\xbm) = \frac{1}{2}\| \ybm - \Hbm(\xbm) \|_2^2,
\end{equation} 
which penalizes the mismatch to the measurements in terms of $\ell_2$-norm. In particular, for the PR problem, the data-fidelity becomes $\frac{1}{2}\| \ybm - |\Abm \xbm| \|_2^2$, which is known to be \emph{non-convex}. Two common choices for the regularizer include the sparsity-enhancing $\ell_1$ penalty $h(\xbm) = \tau\|\xbm\|_1$ and the total variation (TV) penalty $h(\xbm) = \tau\|\Dbm\xbm\|_1$, where $\tau>0$ controls the strength of regularization and $\Dbm$ denotes the discrete gradient operator~\cite{Rudin.etal1992, Tibshirani1996, Candes.etal2006, Donoho2006, Kamilov2017}. 

Two popular methods for solving \eqref{Eq:RegularizedOptimization} are PGM and ADMM. They circumvent the differentiation of non-smooth regularizers by using a mathematical concept called \emph{proximal map}~\cite{Moreau1965}
\begin{equation}
\label{Eq:ProximalOperator}
\prox_{\tau h}(\zbm) \defn \argmin_{\xbm \in \R^n} \left\{\frac{1}{2}\|\xbm-\zbm\|_2^2 + \tau h(\xbm)\right\}.
\end{equation}
A close inspection of \eqref{Eq:ProximalOperator} reveals that the proximal map actually corresponds to an image denoiser based on regularized optimization. This mathematical equivalence led to the development of PnP and RED.

\subsection{Plug-and-play algorithms}

Consider the ADMM iteration
\begin{align}
\label{Eq:ADMM}
\zbm^k &\leftarrow \prox_{\tau g}(\xbm^{k-1} - \sbm^{k-1}) \nonumber \\
\xbm^k &\leftarrow \prox_{\tau h}(\zbm^k + \sbm^{k-1}) \\
\sbm^k &\leftarrow \sbm^{k-1} + (\zbm^k - \xbm^k) \nonumber,
\end{align}
where $k\geq1$ denotes the iteration number. In \eqref{Eq:ADMM}, the regularization is imposed by $\prox_{\tau h}: \R^n \rightarrow \R^n$, which denotes the proximal map of $h$. 

\begin{figure*}
\begin{minipage}[t]{.5\textwidth}
\begin{algorithm}[H]
\caption{$\mathsf{GM}$-$\mathsf{RED}$}\label{alg:RED}
\begin{algorithmic}[1]
\State \textbf{input: } $\xbm^0 \in \R^n$, $\tau > 0$, and $\sigma > 0$
\For{$k = 1, 2, \dots$}
\State $\nabla g(\xbm^{k-1}) \leftarrow \mathsf{fullGradient}(\xbm^{k-1})$
\State $\Gsf(\xbm^{k-1}) \leftarrow \nabla g(\xbm^{k-1}) + \tau(\xbm^{k-1}-\Dsf_\sigma(\xbm^{k-1}))$
\State $\xbm^k \leftarrow \xbm^{k-1} - \gamma \Gsf(\xbm^{k-1})$
\EndFor\label{euclidendwhile}
\end{algorithmic}
\end{algorithm}
\end{minipage}
\hspace{0.25em}
\begin{minipage}[t]{.5\textwidth}
\begin{algorithm}[H]
\caption{$\mathsf{On}$-$\mathsf{RED}$}\label{alg:OnRED}
\begin{algorithmic}[1]
\State \textbf{input: } $\xbm^0 \in \R^n$, $\tau > 0$, $\sigma > 0$, and $B \geq 1$
\For{$k = 1, 2, \dots$}
\State $\nablahat g(\xbm^{k-1}) \leftarrow \mathsf{minibatchGradient}(\xbm^{k-1}, B)$
\State $\Gsfhat(\xbm^{k-1}) \leftarrow \nablahat g(\xbm^{k-1}) + \tau(\xbm^{k-1}-\Dsf_\sigma(\xbm^{k-1}))$
\State $\xbm^k \leftarrow \xbm^{k-1} - \gamma \Gsfhat(\xbm^{k-1})$
\EndFor\label{euclidendwhile}
\end{algorithmic}
\end{algorithm}
\end{minipage}
\end{figure*}

Inspired by the equivalence that the proximal map is a denoiser, Venkatakrishnan \emph{et al.}~\cite{Venkatakrishnan.etal2013} introduced the PnP framework based on ADMM by replacing $\prox_{\tau h}$ in~\eqref{Eq:ADMM} with a general denoising function $\Dsf_\sigma : \R^n \rightarrow \R^n$
\begin{equation}
\label{Eq:PnPADMM}
\xbm^k \leftarrow \Dsf_{\sigma}(\zbm^k + \sbm^{k-1})\nonumber \\
\end{equation}
where $\sigma > 0$ controls the strength of denoising. This simple replacement enables PnP to regularize the problem by using advanced denoisers, such as BM3D~\cite{Dabov.etal2007} and DnCNN. Numerical experiments show that PnP achieves the state-of-the-art performance in many applications. Similar PnP algorithms have been developed using PGM~\cite{Kamilov.etal2017}, primal-dual splitting~\cite{Ono2017}, and approximate message passing (AMP)~\cite{Metzler.etal2016, Fletcher.etal2018}. 

Considerable effort has been made to understand the theoretical convergence of the PnP algorithms~\cite{Sreehari.etal2016, Chan.etal2016, Meinhardt.etal2017, Teodoro.etal2017, Buzzard.etal2017, Sun.etal2018a, Ryu.etal2019}. Recently, Sun \emph{et al.}~\cite{Sun.etal2018a} proposed an online PnP algorithm based on PGM, named PnP-SPGM, and analyzed its fixed-point convergence using the monotone operator theory~\cite{Bauschke.Combettes2017}. This paper extends their results to the RED framework by introducing a new algorithm and analyzing its theoretical convergence.


\subsection{Regularization by Denoising}

The RED framework, proposed by Romano \emph{et al.}~\cite{Romano.etal2017}, is an alternative way to leverage image denoisers. RED has been shown successful in many regularized reconstruction tasks, including image deblurring~\cite{Romano.etal2017}, super-resolution~\cite{Mataev.etal2019}, and phase retrieval~\cite{Metzler.etal2018}. The framework aims to find a fixed point $\xbm^\ast$ that satisfies
\begin{align}
\label{Eq:FixedPoints}
\Gsf(\xbm^\ast) = \nabla g(\xbm^\ast) + \tau (\xbm^\ast - \Dsf_\sigma(\xbm^\ast)) = 0,
\end{align}
where $\tau>0$ and $\nabla g$ denotes the gradient of $g$. Equivalently, $\xbm^\ast$ lies in the zero set of $\Gsf: \R^n \rightarrow \R^n$
\begin{align}
\label{Eq:ZeroSet}
\xbm^\ast \in \zer(\Gsf) \defn \{\xbm \in \R^n\;|\; \Gsf(\xbm) = 0\}.
\end{align}
Romano \emph{et al.} discussed several RED algorithms for finding such $\xbm^\ast$. One popular algorithm is the gradient descent (summarized in  Algorithm~\ref{alg:RED})
\begin{align}
\label{Eq:REDupdate}
\xbm^k &\leftarrow \xbm^{k-1} - \gamma (\nabla g(\xbm^{k-1}) + \Hsf(\xbm^{k-1})) \nonumber \\
&\text{with}\quad \Hsf(\xbm) \defn \tau(\xbm - \Dsf_\sigma(\xbm)),
\end{align}
where $\gamma>0$ is the step-size. They have justified $\Hsf(\cdot)$ as a gradient of some explicit function under some conditions. In particular, when denoiser $\Dsf_\sigma$ is locally homogeneous and has a symmetric Jacobian~\cite{Romano.etal2017, Reehorst.Schniter2019}, $\Hsf$ corresponds to the gradient of the following regularizer
\begin{equation}
\label{Eq:Regularizer}
h(\xbm) = \frac{\tau}{2}\xbm^\Tsf(\xbm-\Dsf_\sigma(\xbm)).
\end{equation}
By having a closed-form objective function, one can use the classical optimization theory to analyze the convergence of RED algorithms~\cite{Romano.etal2017}. On the other hand, fixed-point convergence has also been established without having an explicit objective function~\cite{Reehorst.Schniter2019, Sun.etal2019a}. Reehorst \emph{et al.}~\cite{Reehorst.Schniter2019} have shown that RED proximal gradient methods (RED-PG) converge to a fixed point by utilizing the montone operator theory. Sun \emph{et al.}~\cite{Sun.etal2019a} have established the worst-case convergence for the block coordinate variant of RED algorithm (BC-RED) under a nonexpansive $\Dsf_\sigma$. In this paper, we extend the analysis of BC-RED in~\cite{Sun.etal2019a} to the randomized processing of measurements instead of image blocks, which opens up applications requiring the processing of a large number of measurements.


\section{Online Regularization by Denoising}

We now introduce the proposed online RED (On-RED), which processes the measurements in an online fashion. The online processing of measurements is especially beneficial for problems with the following data-fidelity
\begin{equation}
\label{Eq:ComponentData}
g(\xbm) = \E[g_i(\xbm)] = \frac{1}{I}\sum_{i = 1}^I g_i(\xbm),
\end{equation}
which is composed of $I$ component functions $g_i(\xbm)$, each evaluated only on the subset $\ybm_i$ of the measurements $\ybm$. The computation of the gradient 
\begin{equation}
\label{Eq:ComponentGradient}
\nabla g(\xbm) = \E[\nabla g_i(\xbm)] = \frac{1}{I}\sum_{i = 1}^I \nabla g_i(\xbm),
\end{equation}
is proportional to the total number $I$. Note that the expectation in \eqref{Eq:ComponentData} and \eqref{Eq:ComponentGradient} is taken over a uniformly distributed random variable ${i \in \{1,\dots, I\}}$. Large $I$ effectively precludes the usage of batch GM-RED algorithms because of large memory requirements or impractical computation times. The key idea of On-RED is to approximate the gradient at every iteration by  averaging $B \ll I$ component gradients
\begin{equation}
\label{Eq:StochGrad}
\nablahat g(\xbm) = \frac{1}{B}\sum_{b = 1}^B \nabla g_{i_b}(\xbm),
\end{equation}
where $i_1, \dots, i_B$ are independent random indices that are distributed uniformly over $\{1, \dots, I\}$. The \emph{minibatch} size parameter $B \geq 1$ controls the number of gradient components used at every iteration.

Algorithm~\ref{alg:OnRED} summarizes the algorithmic details of On-RED, where the operation $\mathsf{minibatchGradient}$ computes the averaged gradients with respect to the selected minibatch components. Note that at each iteration, the minibatch is randomly sampled from the entire set of measurements. In the next section, we will present the theoretical convergence analysis of On-RED.

\section{Convergence Analysis under Convexity}
\label{Sec:Theory}

A fixed-point convergence of averaged operators is well known as Krasnosel'skii-Mann theorem~\cite{Bauschke.Combettes2017}, which was applied to the aforementioned analysis of  PnP~\cite{Sun.etal2018a} and RED algorithms~\cite{Reehorst.Schniter2019, Sun.etal2019a}. Here, our analysis extends these results to the online processing of measurements and provides explicit worst-case convergence rates for On-RED. Note that our analysis does not assume that $\mathsf{H}$ corresponds to any explicit regularizer $h$. We first introduce the assumptions necessary for our analysis and then present the main results.

\begin{assumption}
\label{As:DataFitConvexity}
We make the following assumptions on the data-fidelity term $g$:
\setlist{nolistsep,leftmargin=*}
\begin{enumerate}[label=(\alph*)]
\item The component functions $g_i$ are all convex and differentiable with the same Lipschitz constant $L > 0$.
\item At every iteration, the gradient estimate is unbiased and has a bounded variance:
$$\E[\nablahat g(\xbm)] = \nabla g(\xbm),\;\; \E[\|\nabla g(\xbm)-\nablahat g(\xbm)\|_2^2] \leq \frac{\nu^2}{B},$$
for some constant $\nu > 0$.
\end{enumerate}
\end{assumption}

\noindent
Assumption~\ref{As:DataFitConvexity}(a) implies that the overall data-fidelity $g$ is also convex and has Lipschitz continuous gradient with constant $L$. Assumption~\ref{As:DataFitConvexity}(b) assumes that the minibatch gradient is an unbiased estimate of the full gradient. The bounded
variance assumption is a standard assumption used in the analysis of online and stochastic algorithms~\cite{Ghadimi.Lan2016, Bernstein.etal2018, Xu.etal2018, Sun.etal2018a}

\begin{assumption}
\label{As:NonemptySet}
Let operator $\Gsf$ have a nonempty zero set $\zer(\Gsf) \neq \varnothing$. The distance between the the farthest point in $\zer(\Gsf)$ and the sequence $\{\xbm^{k}\}_{k=0,1,\cdots}$ generated by On-RED is bounded by constant $R_0$
$$\max_{\xbmast \in \zer(\Gsf)} \|\xbm^k-\xbmast\|_2 \leq R_0,\quad k\geq 0$$
\end{assumption}
\noindent

This assumption indicates that the iterates of On-RED lie within a Euclidean ball of a bounded radius from $\zer(\Gsf)$. 

\begin{assumption}
\label{As:NonexpansiveDen}
Given $\sigma>0$, the denoiser $\Dsf_\sigma$ is a nonexpansive operator such that
$$\| \Dsf_\sigma(\xbm) -  \Dsf_\sigma(\ybm)\|_2 \leq \|\xbm-\ybm\|_2\quad \xbm,\ybm \in \R^n,$$
\end{assumption}

\noindent
Since the proximal operator is nonexpansive~\cite{Parikh.Boyd2014}, it automatically satisfies this assumption. Nonexpansive CNN denoisers can also be trained by using spectral normalization techniques~\cite{Sun.etal2019a}. Under the above assumptions, we now establish the convergence theorem for On-RED.

\begin{theorem}
\label{Thm:ConvThm1}
Run On-RED for $t \geq 1$ iterations under Assumptions~\ref{As:DataFitConvexity}-\ref{As:NonexpansiveDen} using a fixed step-size $\gamma \in (0,1/(L+2\tau)]$ and a fixed minibatch size $B\geq1$. Then, we have
\begin{align}
\E &\left[\min_{k \in \{1, \dots, t\}} \|\Gsf(\xbm^{k-1})\|_2^2\right] \nonumber \\
&\leq\E\left[\frac{1}{t}\sum_{k = 1}^t \|\Gsf(\xbm^{k-1})\|_2^2\right]  \nonumber \\
&\leq\frac{(L+2\tau)}{\gamma} \left[\frac{\nu^2\gamma^2}{B} + \frac{2\gamma\nu}{\sqrt{B}}R_0 + \frac{R^2_0}{t} \right]. \nonumber
\end{align}
\end{theorem}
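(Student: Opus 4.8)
The plan is to follow the standard Krasnosel'skii–Mann / stochastic-gradient template, treating the On-RED update as a perturbed fixed-point iteration of the operator $\Tsf \defn \Isf - \gamma\Gsf$ and controlling the perturbation with Assumption~\ref{As:DataFitConvexity}(b). First I would write one iteration as $\xbm^k = \xbm^{k-1} - \gamma\Gsf(\xbm^{k-1}) + \gamma\ebm^{k}$, where $\ebm^k \defn \nabla g(\xbm^{k-1}) - \nablahat g(\xbm^{k-1})$ is the zero-mean gradient-estimation error with $\E[\|\ebm^k\|_2^2 \mid \xbm^{k-1}] \le \nu^2/B$. Fix any $\xbmast \in \zer(\Gsf)$. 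Expanding $\|\xbm^k - \xbmast\|_2^2$ gives the descent-type identity
\begin{align}
\|\xbm^k - \xbmast\|_2^2 &= \|\xbm^{k-1} - \xbmast\|_2^2 - 2\gamma\langle \Gsf(\xbm^{k-1}), \xbm^{k-1}-\xbmast\rangle \nonumber \\
&\quad + \gamma^2\|\Gsf(\xbm^{k-1})\|_2^2 + 2\gamma\langle \ebm^k, \xbm^{k-1}-\xbmast - \gamma\Gsf(\xbm^{k-1})\rangle \nonumber \\
&\quad + \gamma^2\|\ebm^k\|_2^2. \nonumber
\end{align}

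The key structural step is to show the deterministic inequality
$$2\gamma\langle \Gsf(\xbm^{k-1}), \xbm^{k-1}-\xbmast\rangle \;\ge\; \frac{2\gamma}{L+2\tau}\,\|\Gsf(\xbm^{k-1})\|_2^2,$$
i.e.\ that $\Gsf$ is (co)coercive with constant $1/(L+2\tau)$ relative to its zero set. This is where the assumptions combine: $\nabla g$ is $L$-Lipschitz and monotone (convexity, Assumption~\ref{As:DataFitConvexity}(a)), so it is $(1/L)$-cocoercive by Baillon–Haddad; and $\Hsf = \tau(\Isf - \Dsf_\sigma)$ with $\Dsf_\sigma$ nonexpansive (Assumption~\ref{As:NonexpansiveDen}) is $(1/(2\tau))$-cocoercive since $\Isf - \Dsf_\sigma$ is firmly nonexpansive-type — precisely, $\langle \Hsf(\xbm) - \Hsf(\ybm), \xbm - \ybm\rangle \ge \frac{1}{2\tau}\|\Hsf(\xbm)-\Hsf(\ybm)\|_2^2$. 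Summing and using that the reciprocals of cocoercivity constants add under addition (a short Cauchy–Schwarz / convexity argument) yields cocoercivity of $\Gsf = \nabla g + \Hsf$ with constant $1/(L+2\tau)$; evaluating against $\xbmast$ where $\Gsf(\xbmast)=0$ gives the displayed bound. With $\gamma \le 1/(L+2\tau)$, the term $\gamma^2\|\Gsf(\xbm^{k-1})\|_2^2$ is dominated, leaving a net coefficient of at least $\gamma/(L+2\tau)$ (or, after being slightly less tight, $\gamma\cdot\frac{1}{L+2\tau}$) on $\|\Gsf(\xbm^{k-1})\|_2^2$ on the favorable side.

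Next I would take conditional expectation given $\xbm^{k-1}$: the cross term with $\ebm^k$ against $\xbm^{k-1}-\xbmast$ vanishes by unbiasedness, the cross term against $-\gamma\Gsf(\xbm^{k-1})$ is bounded by $2\gamma^2\nu/\sqrt{B}\cdot\|\Gsf(\xbm^{k-1})\|_2$ via Jensen and $\E\|\ebm^k\|_2 \le \nu/\sqrt{B}$, and $\gamma^2\|\ebm^k\|_2^2 \le \gamma^2\nu^2/B$. Here I would further bound $\|\Gsf(\xbm^{k-1})\|_2 \le$ something controlled, or more simply just carry the $\|\Gsf(\xbm^{k-1})\|_2$ term and absorb it — actually the cleanest route is to bound $\langle \ebm^k, -\gamma\Gsf\rangle$ crudely by moving to the iterate distance: note $\|\xbm^{k-1}-\xbmast - \gamma\Gsf(\xbm^{k-1})\|_2 = \|\Tsf\xbm^{k-1} - \xbmast\|_2 \le \|\xbm^{k-1}-\xbmast\|_2 \le R_0$ by nonexpansiveness of $\Tsf$ and Assumption~\ref{As:NonemptySet}, so the entire inner-product term is bounded in expectation by $2\gamma(\nu/\sqrt{B})R_0$. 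This gives the one-step inequality
$$\frac{\gamma}{L+2\tau}\,\E\|\Gsf(\xbm^{k-1})\|_2^2 \le \E\|\xbm^{k-1}-\xbmast\|_2^2 - \E\|\xbm^{k}-\xbmast\|_2^2 + \frac{2\gamma\nu}{\sqrt{B}}R_0 + \frac{\gamma^2\nu^2}{B}.$$
Summing over $k=1,\dots,t$ telescopes the distance terms (dropping the final nonnegative one and bounding the initial one by $R_0^2$ via Assumption~\ref{As:NonemptySet}), dividing by $t$, multiplying through by $(L+2\tau)/\gamma$, and using that the minimum is at most the average gives exactly the claimed bound.

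The main obstacle I anticipate is the cocoercivity bookkeeping in the second paragraph: verifying that $\Isf - \Dsf_\sigma$ inherits the right cocoercivity constant from nonexpansiveness of $\Dsf_\sigma$, and that cocoercivity constants combine additively, so that the step-size ceiling $1/(L+2\tau)$ is exactly what makes $\gamma\Gsf$ behave like an averaged operator. Everything after that point is the routine stochastic-approximation telescoping; the only mild care needed is keeping the $\gamma^2$ terms on the correct side and using $\|\Tsf\xbm^{k-1}-\xbmast\|_2 \le R_0$ rather than a separate bound on $\|\Gsf\|_2$.
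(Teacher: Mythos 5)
Your proposal is correct and follows essentially the same route as the paper: both treat the On-RED update as a stochastically perturbed step of the nonexpansive operator $\Isf-\gamma\Gsf$, bound the deterministic part through the cocoercivity-type descent inequality with constant $\gamma/(L+2\tau)$, control the cross term via Cauchy--Schwarz/Jensen together with the $R_0$ bound of Assumption~\ref{As:NonemptySet}, and telescope after taking conditional expectations. The only difference is cosmetic: you derive the key inequality inline (Baillon--Haddad for $\nabla g$, $\tfrac{1}{2\tau}$-cocoercivity of $\Hsf=\tau(\Isf-\Dsf_\sigma)$ from nonexpansiveness of $\Dsf_\sigma$, and additivity of inverse cocoercivity constants), whereas the paper imports the same bound from the supplementary material of~\cite{Sun.etal2019a} via Proposition~\ref{Prp:NonexpansiveP}.
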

\begin{proof}
See Section~\ref{Sec:Proof}.
\end{proof}
\noindent
When $t$ goes to infinity, this theorem shows that the accuracy of the expected convergence of On-RED to an element of $\zer(\Gsf)$ improves with smaller $\gamma$ and larger $B$. For example, we can have the convergence rate of $O(1/\sqrt{t})$ by setting $\gamma = 1/(L+2\tau)$ and $B = t$
$$\E\left[\frac{1}{t}\sum_{k = 1}^t \|\Gsf(\xbm^{k-1})\|_2^2\right]\leq \frac{C}{\sqrt{t}},$$
where $C>0$ is a constant and we use the bound $\frac{1}{t}\leq \frac{1}{\sqrt{t}}$ that is valid for $t \geq 1$.


\begin{figure*}[t]
	\centering\includegraphics[width=\linewidth]{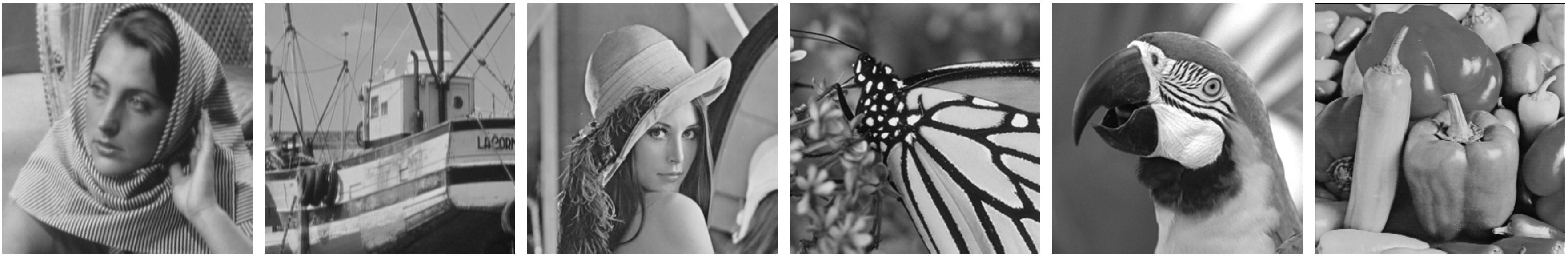}
	\caption{Test images used in the experiments. From left to right: \textit{Barbara}, \textit{Boat}, \textit{Lenna}, \textit{Monarch}, \textit{Parrot}, \textit{Pepper}.}
	\label{fig:truth}
\end{figure*}

\begin{figure*}[t]
	\centering\includegraphics[width=\linewidth]{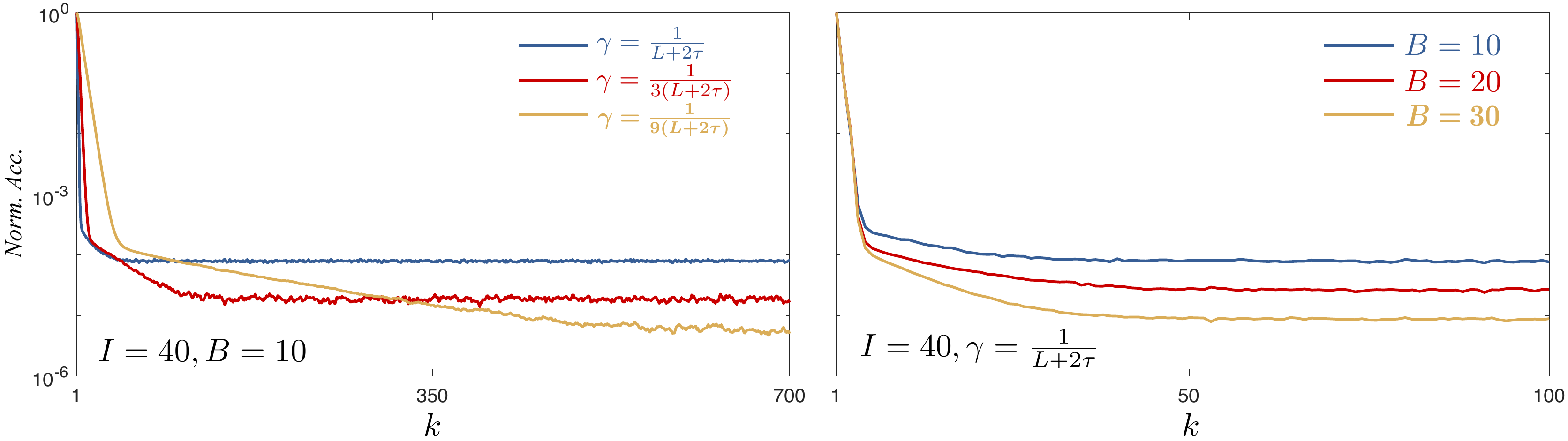}
	\caption{Illustration of the influence of $\gamma$ and $B$ on the convergence of On-RED for phase retrieval under DnCNN$^\ast$. The left plot shows the convergence results of On-RED for three different step sizes with a fixed minibatch size $B = 10$ and the right plot shows the results of On-RED for three different minibatch sizes with a fixed step size $\gamma = \frac{1}{L+2\tau}$. Both experiments draw random samples from a total of $I=40$ measurements. The plots validate that smaller $\gamma$ and larger $B$ improve the convergence accuracy in this nonconvex setting.}
	\label{fig:step_and_batch}
\end{figure*}

\section{Numerical Simulation for Phase Retrieval}

In this section, we test the performance of On-RED on a nonconvex phase retrieval problem from \emph{coded diffraction patterns (CDP)}. The state-of-the-art performance of RED for this problem was shown by Metzler \emph{et al.} \cite{Metzler.etal2018}. Here, we investigate the convergence of On-RED and show its effectiveness for reducing the per-iteration complexity of the traditional batch GM-RED. Our results show the potential of On-RED to scale to a large member of measurements under powerful denoisers that do not correspond to explicit regularizers.

\subsection{Experiment Setup}

In CDP, the object $\xbm \in \mathbb{R}^{n}$ is illuminated by a coherent light source. A random known phase mask modulates the light and the modulation code is denoted as $\bm{M}_i$ for the $i$th measurement. In this work, each entry of $\bm{M}_i$ is drawn uniformly from the unit circle in the complex plane. The light goes through the far-field Fraunhofer diffraction and a camera measures its intensity $\ybm_i \in \mathbb{R}_{+}$. Since Fraunhofer diffraction can be modeled by 2D Fourier Transform, the $i$th data-fidelity term of this phase reconstruction problem can be formulated as follows:
$$g_i(\xbm)=\frac{1}{2}\|\ybm_i-|\bm{F}\bm{M}_i \xbm|\|_{2}^{2} $$
where $\bm{F}$ denotes 2D discrete Fast Fourier Transform (FFT). The total data-fidelity term for all the measurements then becomes 
$$g(\xbm) = \E[g_i(\xbm)] = \frac{1}{I}\sum_{i = 1}^I g_i(\xbm).$$
Noticeably, this problem is well suited for On-RED because it has the same formulation as (\ref{Eq:ComponentData}).

In the experiments, we reconstruct six $256 \times 256$ standard grayscale natural images, displayed in Figure \ref{fig:truth}. The simulated measurements are corrupted by AWGN corresponding to 25 dB of input signal-to-noise ratio (SNR), defined as follows
$$\operatorname{SNR}(\hat{\ybm},\ybm)=20\log_{10}\frac{\|\ybm\|}{\|\ybm-\hat{\ybm}\|}$$
where $\hat{\ybm}$ represents the noisy vector and $\ybm$ denotes the ground truth. We also use SNR as a quantitative measure for the quality of reconstructions.

We used DnCNN$^\ast$ as our CNN denoiser for the experiments. The architecture of DnCNN$^\ast$ is illustrated in Figure \ref{fig:Schema} and was adopted from the popular DnCNN. We generated training examples by adding AWGN to images from BSD400 and applying standard data augmentation strategy including flipping, rotating, and rescaling. We used the residual learning technique where DnCNN$^\ast$ predicts the noise image from the input. The network was trained to minimize the following loss
\begin{equation}
\mathcal{L}_\theta= \frac{1}{n} \sum_{i=1}^{n} \left\{\|f_\theta(\xbm_i) - \ybm_i\|_2^2 + \|f_\theta(\xbm_i) - \ybm_i\|_1\right\},
\end{equation}
where $\xbm_i$ is the noisy input, $\ybm_i$ is the noise, and $f_\theta$ represents DnCNN$^\ast$.

The hyperparameters for experiments in \ref{Subsec:conv_exp} and \ref{Subsec:perf_exp} are listed in Table \ref{Tab:Param}. All algorithms start from $\xbm^{0}=\zerobm$, where $\zerobm \in \mathbb{R}^{n}$ is all zeros. The value of $\tau$ for each image was optimized for the best SNR performance with respect to ground truth test images. In this paper, the values of $B$ and $I$ are set only to show the potential of On-RED dealing with large datasets. 
\begin{table}[]
	\centering
	\scriptsize
	\textbf{\caption{\label{Tab:Param} List of algorithmic hyperparameters}}
	\begin{tabular*}{\linewidth}{C{1pt}C{90pt}C{55pt}C{45pt}} 			
		\toprule
		\multicolumn{2}{c}{\textbf{Hyperparameters}} & \textbf{\ref{Subsec:conv_exp}} & \textbf{\ref{Subsec:perf_exp}} \\	 
		\cmidrule(lr){1-2} \cmidrule(lr){3-4}
		$\xbm^{0}$   & initial point of reconstructions & $\zerobm$  & $\zerobm$ \\
		$\sigma$ & input noise level for denoisers & $5$ & $5$ \\
		$\tau$ & level of regularization in RED & $0.2$ & optimized \\ 
		$\gamma$   & step size & $\frac{1}{L+2\tau} \cdot \{1, \frac{1}{3}, \frac{1}{9}\}$ & $\frac{1}{L+2\tau}$ \\
		$B$ & minibatch size at every iteration & $\{10, 20, 30\}$ & $1$ \\
		$I$ & batch size & $40$ & $6$\\ 
		\bottomrule
	\end{tabular*}
\end{table}

\begin{table}[t]
	\centering
	\scriptsize
	\textbf{\caption{\label{Tab:Distance} Convergence accuracy averaged over the test images}}
	\begin{tabular*}{\linewidth}{L{20pt}C{25pt}C{25pt}C{25pt}C{21.2pt}C{21.2pt}C{21.2pt}} 			
		\toprule
		\textbf{Denoiser} & \multicolumn{3}{c}{\textbf{Step-size} ($\gamma$)} & \multicolumn{3}{c}{\textbf{Mini-batch size} ($B$)} \\ 
		\cmidrule(l){2-4} \cmidrule(lr){5-7}
		& $\frac{1}{L+2\tau}$ & $\frac{1}{3(L+2\tau)}$ & $\frac{1}{9(L+2\tau)}$ & 10 & 20 & 30 \\ 
		\cmidrule(lr){1-7}
		\textbf{TV}   & 8.65e-5 & 2.36e-5 & 9.43e-6 & 8.65e-5 & 2.81e-5 & 9.81e-6 \\
		\textbf{BM3D} & 8.01e-5 & 1.59e-5 & 9.10e-6 & 8.01e-6 & 2.72e-5 & 8.93e-6 \\
		\textbf{DnCNN$^\ast$} & 7.63e-5 & 1.94e-6 & 5.03e-6 & 7.63e-5 & 2.72e-5 & 8.88e-6 \\ 
		\bottomrule
	\end{tabular*}
\end{table}

\begin{figure*}[t]
	\centering\includegraphics[width=\linewidth]{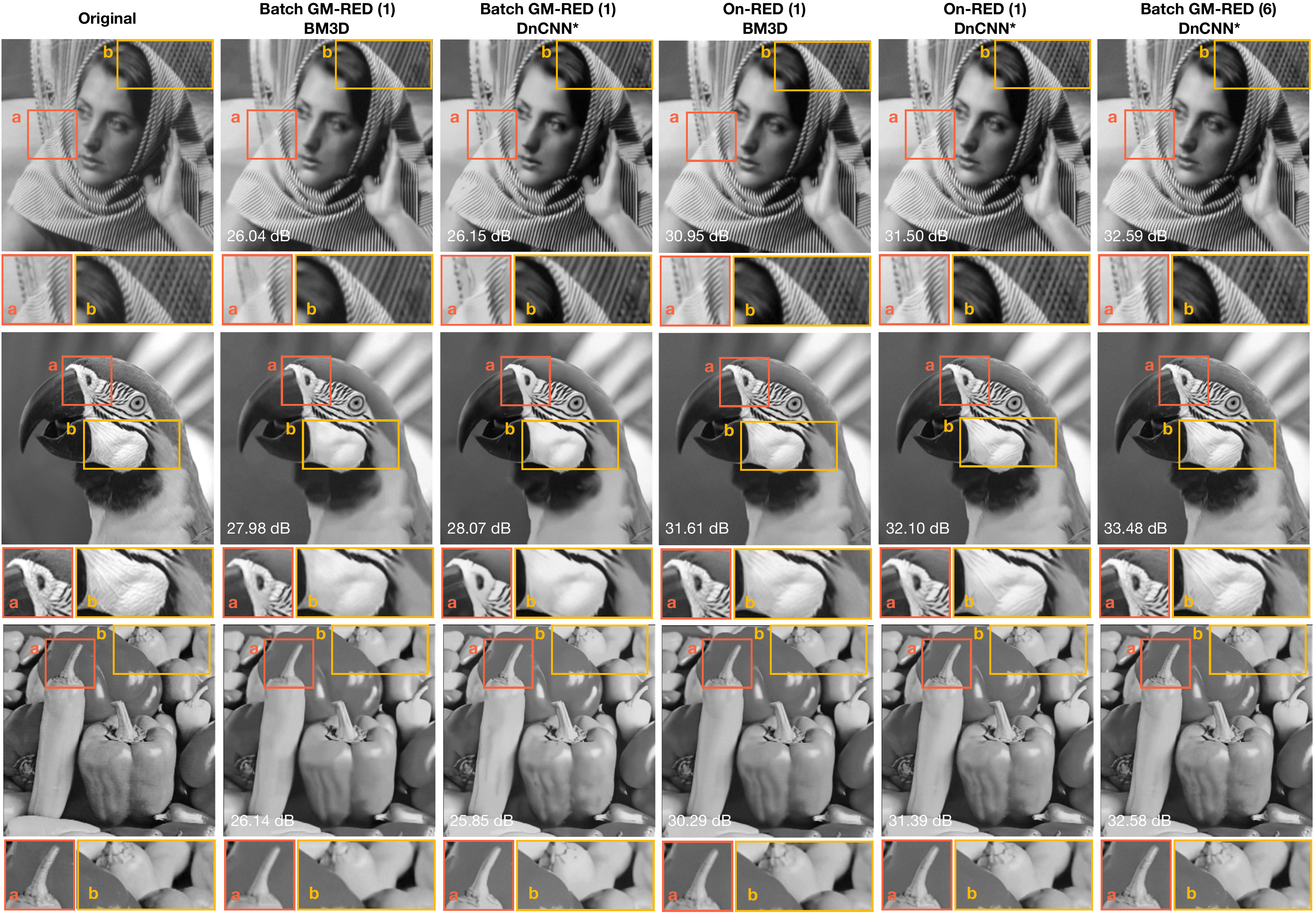}
	\caption{Visual examples of recontructed \textit{Barbara}, \textit{Parrot}, and \textit{Pepper} images by GM-RED (1), On-RED (1), and GM-RED (6) with BM3D and DnCNN$^\ast$ denoisers. The original images are displayed in the first column. The second and the third columns show the results of batch GM-RED using 1 fixed measurement. The fourth and the fifth columns present the results of On-RED using a single randomly selected measurement per-iteration out of 6 total measurements. The results of the batch algorithm using all 6 measurements are given in the last column. Differences are zoomed in using boxes inside the images. Each reconstruction is labeled by its SNR (dB) with respect to the original image. Note that On-RED (1) recovers the details lost by GM-RED (1) by approaching the performance of GM-RED (6)}
	\label{fig:examples}
\end{figure*}

\subsection{Convergence of On-RED}
\label{Subsec:conv_exp}
Theorem \ref{Thm:ConvThm1} implies that the expected accuracy improves for a smaller step size $\gamma$ and larger minibatch size $B$. In order to numerically evaluate the convergence, we define and consider the following normalized accuracy
$$\textit{Norm. Acc.} \defn \|\Gsf(\xbm^k)\|_2^2 / \|\Gsf(\xbm^0)\|_2^2$$
where $\Gsf$ is defined in (\ref{Eq:FixedPoints}). As the sequence $\{\xbm^k\}_\text{k=0,1,...}$ converges to a fixed point in $\zer(\Gsf)$,  the normalized accuracy decreases to zero. 

Figure \ref{fig:step_and_batch} (left) shows the evolution of the convergence accuracy for $\gamma \in \{\frac{1}{L+2\tau}, \frac{1}{3(L+2\tau)}, \frac{1}{9(L+2\tau)}\}$ with DnCNN$^\ast$. Here, $L$ denotes the Lipschitz constant defined in Assumption \ref{As:DataFitConvexity} and $\tau$ represents the parameter of RED. We observe that the empirical performance of On-RED using DnCNN$^\ast$ is consistent with Theorem \ref{Thm:ConvThm1}, as the accuracy improves with smaller step size. Moreover, Figure \ref{fig:step_and_batch} (right) numerically evaluates the convergence accuracy of On-RED for minibatch size $B \in \{10, 20, 30\}$. This plot shows that the convergence accuracy improves when minibatch size $B$ becomes larger. Therefore, the change of convergence accuracy with both step size $\gamma$ and minibatch size $B$ follows the same trend in Theorem \ref{Thm:ConvThm1} for this nonconvex problem.

We note that the similar trend generalizes to BM3D and TV denoisers as well. The summary in Table \ref{Tab:Distance} gives the convergence results of all three denoisers.

\subsection{Benefits of On-RED with a CNN Denoiser}
\label{Subsec:perf_exp}
In this subsection, we show the performance and efficiency of On-RED in solving CDP. To understand the potential of On-RED to scale to large datasets, we consider the scenario where the number of illuminations processed at every iteration is fixed to one.

Table \ref{Tab:SNR} provides the SNR performance of different algorithms. GM-RED (fixed 1) uses 1 fixed measurement and On-RED $(B=1)$ uses 1 random measurement out of 6 total measurements at every iteration, so they have the same per-iteration computation cost. On-RED outperforms GM-RED by 4.54 dB and 4.99 dB under BM3D and DnCNN$^\ast$, respectively, by actually using all measurements. We also note that the average SNR of \emph{stochastic gradient method (SGM)} ($B=1$) is higher than that of GM-RED (fixed 1) for both denoisers. This implies that the online processing in SGM boosts the SNR more than the regularization of GM-RED. By combining online processing and advanced denoisers, On-RED largely improves the reconstruction performance, which is close to that of the batch algorithm GM-RED (6) using all 6 measurements.

\begin{table}[t]
	\centering
	\scriptsize
	\textbf{\caption{\label{Tab:SNR} Optimized SNR for each test image in dB}}
	\begin{tabular*}{\linewidth}{L{28pt}C{26pt}C{19pt}C{19pt}C{19pt}C{19pt}C{28pt}} 	
		\toprule
		\textbf{Algorithms} & SGM & \multicolumn{2}{c}{GM-RED} & \multicolumn{2}{c}{On-RED} & GM-RED\\
		\multicolumn{1}{c}{($I=6$)} & ($B=1$) & \multicolumn{2}{c}{(fixed 1)} & \multicolumn{2}{c}{($B=1$)} & (fixed 6) \\
		\cmidrule(lr){2-2} \cmidrule(l){3-4} \cmidrule(l){5-6} \cmidrule(lr){7-7}
		\textbf{Denoisers} & --- & BM3D & DnCNN$^\ast$ & BM3D & DnCNN$^\ast$ & DnCNN$^\ast$   \\
		\midrule
		\textit{Barbara} & 27.37     & 26.04     & 26.15     & 30.95       & 31.50        & 32.59     \\
		\textit{Boat}    & 27.68     & 26.90     & 27.53     & 31.65       & 32.61        & 33.17     \\
		\textit{Lenna}   & 27.65     & 26.55     & 27.58     & 31.47       & 32.54        & 33.20     \\
		\textit{Monarch} & 27.51     & 24.76     & 26.34     & 29.66       & 31.31        & 32.63     \\
		\textit{Parrot}  & 27.20     & 27.98     & 28.07     & 31.61       & 32.10        & 33.48     \\
		\textit{Pepper}  & 27.08     & 26.14     & 25.85     & 30.29       & 31.39        & 32.58     \\
		\midrule
		\textbf{Average} & 27.42     & 26.40     & 26.92     & 30.94       & 31.91        & 32.94 	   \\
		\bottomrule
	\end{tabular*}
\end{table}

Visual illustrations of \textit{Barbara}, \textit{Parrot}, and \textit{Pepper} are given in Figure \ref{fig:examples}. It is clear that the images reconstructed by On-RED (1) preserve the features lost by GM-RED (1), such as the stripes in \textit{Barbara}, the white feather in \textit{Parrot}, and the stems in \textit{Pepper}. Moreover, these features in the reconstructed images of On-RED (1) have no visual difference from the results of GM-RED (6), as illustrated by column 4, 5, and 6. This indicates that the online algorithm approaches the image quality of the batch algorithm with a lower per-iteration complexity. 

\section{Conclusion}

In this paper, we proposed an online algorithm for the Regularization by Denoising framework. We provided the theoretical convergence proof under a few transparent assumptions and a detailed analysis in a convex problem setting. We then applied On-RED to a nonconvex phase retrieval problem from coded diffraction patterns to show its convergence. The performance of On-RED with our learning denoiser DnCNN$^\ast$ demonstrated that On-RED is well compatible with powerful denoisers that do not correspond to explicit regularizers. Our results showed that On-RED has the potential to solve data-intensive problems involving a large number of measurements by reducing per-iteration computation cost.

\section{Proof of Theorem~\ref{Thm:ConvThm1}}
\label{Sec:Proof}

We consider the following two operators
$$\Psf \defn \Isf - \gamma\Gsf \quad\text{and}\quad \Psfhat \defn \Isf - \gamma\Gsfhat \nonumber$$
where $\Psfhat$ is the online variant of $\Psf$. The iterates of On-RED can be expressed as 
$$\xbm^{k} = \Psfhat(\xbm^{k-1}) = \xbm^{k-1} - \gamma\Gsfhat(\xbm^{k-1}), \;\;\text{with}\;\; \Gsfhat = \nablahat g + \Hsf.$$
Note also the following equivalence
$$\xbm^\ast \in \zer(\Gsf) \quad\Leftrightarrow\quad \xbm^\ast \in \fix(\Psf)$$

\begin{proposition}
\label{Prp:StochP}
Consider an operater $\Psf$ and its online variant $\Psfhat$. If the data-fidelity $g(\cdot)$ satisfies Assumption~\ref{As:DataFitConvexity}, then we have
$$\E[\Psfhat (\xbm)] = \Psf(\xbm),\;\; \E[\|\Psf(\xbm)-\Psfhat(\xbm)\|_2^2] \leq \frac{\gamma^2\nu^2}{B}.$$
\end{proposition}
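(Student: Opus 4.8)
The plan is to exploit the fact that $\Psf$ and $\Psfhat$ differ only in their data-fidelity gradient term: since $\Gsf = \nabla g + \Hsf$ and $\Gsfhat = \nablahat g + \Hsf$ share the same deterministic operator $\Hsf(\xbm) = \tau(\xbm - \Dsf_\sigma(\xbm))$, subtracting $\Psfhat(\xbm) = \xbm - \gamma\Gsfhat(\xbm)$ from $\Psf(\xbm) = \xbm - \gamma\Gsf(\xbm)$ makes the identity term and the $\gamma\Hsf(\xbm)$ term cancel, leaving the single clean identity
$$\Psf(\xbm) - \Psfhat(\xbm) = \gamma\bigl(\nablahat g(\xbm) - \nabla g(\xbm)\bigr).$$
Everything else reduces to applying Assumption~\ref{As:DataFitConvexity}(b) to the right-hand side.

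For the unbiasedness claim, I would write $\Psfhat(\xbm) = \xbm - \gamma\nablahat g(\xbm) - \gamma\Hsf(\xbm)$, note that $\xbm$ and $\Hsf(\xbm)$ are deterministic (the randomness lives entirely in the minibatch indices $i_1,\dots,i_B$ defining $\nablahat g$), and pass the expectation through by linearity; the first part of Assumption~\ref{As:DataFitConvexity}(b) gives $\E[\nablahat g(\xbm)] = \nabla g(\xbm)$, hence $\E[\Psfhat(\xbm)] = \xbm - \gamma\nabla g(\xbm) - \gamma\Hsf(\xbm) = \Psf(\xbm)$.

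For the variance bound, I would take the squared Euclidean norm of the displayed identity, pull the scalar $\gamma^2$ out of $\|\cdot\|_2^2$, take expectations, and invoke the bounded-variance part of Assumption~\ref{As:DataFitConvexity}(b), namely $\E[\|\nabla g(\xbm) - \nablahat g(\xbm)\|_2^2] \leq \nu^2/B$, to conclude $\E[\|\Psf(\xbm) - \Psfhat(\xbm)\|_2^2] \leq \gamma^2\nu^2/B$.

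Honestly, there is no genuine obstacle here: the proposition is just a restatement of Assumption~\ref{As:DataFitConvexity}(b) after composing with the affine map $\zbm \mapsto \xbm - \gamma\zbm - \gamma\Hsf(\xbm)$, which scales the first two moments of the estimation error by $1$ and $\gamma^2$ respectively. The only point worth stating carefully is that $\Hsf$ contributes nothing to the stochastic fluctuation, so the estimate transfers verbatim up to the $\gamma^2$ factor; this is exactly what will later let us treat $\Psfhat$ as an unbiased, bounded-variance perturbation of the averaged operator $\Psf$ in the proof of Theorem~\ref{Thm:ConvThm1}.
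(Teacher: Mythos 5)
Your proof is correct and follows essentially the same route as the paper: both reduce the claim to Assumption~\ref{As:DataFitConvexity}(b) by observing that $\Hsf$ is deterministic, so $\Psf(\xbm)-\Psfhat(\xbm)=\gamma(\nablahat g(\xbm)-\nabla g(\xbm))$, giving unbiasedness by linearity of expectation and the variance bound with the extra $\gamma^2$ factor. No gaps.
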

\begin{proof}
First, we can show 
$$\E[\Gsfhat (\xbm)] = \E[\nablahat g(\xbm)] + \Hsf(\xbm) = \Gsf(\xbm)$$
and
$$\E[\|\Gsf(\xbm) - \Gsfhat (\xbm)\|_2^2] = \E[\|\nabla g(\xbm) - \nablahat g(\xbm)\|_2^2] \leq \frac{\nu^2}{B}$$
Then, we can prove the desired result
$$\E[\Psfhat (\xbm)] = \Isf - \gamma\E[\Gsfhat(\xbm)]  = \Psf(\xbm)$$
and
$$\E[\|\Psf(\xbm) - \Psfhat (\xbm)\|_2^2] = \gamma^2\;\E[\|\Gsf(\xbm) - \Gsfhat(\xbm)\|_2^2] \leq \frac{\gamma^2\nu^2}{B}$$
\end{proof}

\begin{proposition}
\label{Prp:NonexpansiveP}
Let the denoiser $\Dsf_\sigma$ be such that it satisfies Assumption~\ref{As:NonexpansiveDen} and $\nabla g$ is L-Lipschitz continuous. For any $\gamma \in (0, 1/(L+2\tau)]$, the operator $\Psf$ is nonexpansive
$$\|\Psf(\xbm) - \Psf(\ybm)\|_2 \leq \|\xbm - \ybm\|_2\quad \forall \xbm,\ybm \in \R^n$$
\end{proposition}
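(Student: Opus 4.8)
The plan is to exhibit $\Psf = \Isf - \gamma\Gsf$ as a convex combination of two operators that are individually nonexpansive under the bound $\gamma \le 1/(L+2\tau)$, and then conclude by convexity of the norm. Recalling that $\Gsf = \nabla g + \Hsf$ with $\Hsf(\xbm) = \tau(\xbm - \Dsf_\sigma(\xbm))$, I would split the identity evenly and define
\[
\Psf_1 \defn \Isf - 2\gamma\nabla g, \qquad \Psf_2 \defn (1-2\gamma\tau)\Isf + 2\gamma\tau\,\Dsf_\sigma .
\]
A one-line expansion shows $\Psf = \tfrac{1}{2}\Psf_1 + \tfrac{1}{2}\Psf_2$, so by the triangle inequality $\|\Psf(\xbm)-\Psf(\ybm)\|_2 \le \tfrac12\|\Psf_1(\xbm)-\Psf_1(\ybm)\|_2 + \tfrac12\|\Psf_2(\xbm)-\Psf_2(\ybm)\|_2$, and it suffices to prove each $\Psf_i$ is nonexpansive.

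For $\Psf_2$, the step-size bound yields $0 \le 2\gamma\tau \le 2\tau/(L+2\tau) \le 1$, so $\Psf_2$ is a genuine convex combination of the identity and $\Dsf_\sigma$; since $\Dsf_\sigma$ is nonexpansive by Assumption~\ref{As:NonexpansiveDen} and a convex combination of nonexpansive operators is nonexpansive, $\Psf_2$ is nonexpansive. For $\Psf_1$, I would use that $g$ is convex with $L$-Lipschitz gradient (Assumption~\ref{As:DataFitConvexity}(a)) and invoke the Baillon--Haddad theorem to get the $(1/L)$-cocoercivity inequality $\langle \nabla g(\xbm)-\nabla g(\ybm),\, \xbm - \ybm\rangle \ge \tfrac{1}{L}\|\nabla g(\xbm)-\nabla g(\ybm)\|_2^2$. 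Expanding the square then gives
\[
\|\Psf_1(\xbm)-\Psf_1(\ybm)\|_2^2 = \|\xbm-\ybm\|_2^2 - 4\gamma\Big(\tfrac{1}{L}-\gamma\Big)\|\nabla g(\xbm)-\nabla g(\ybm)\|_2^2 \le \|\xbm-\ybm\|_2^2,
\]
where the inequality uses $\gamma \le 1/(L+2\tau) \le 1/L$. Combining the two bounds proves the claim.

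The argument is short, so there is no serious technical obstacle; the one place that needs care (rather than a naive Lipschitz estimate) is the appeal to Baillon--Haddad cocoercivity, which is exactly what keeps $\Isf - 2\gamma\nabla g$ nonexpansive for all $\gamma$ up to $1/L$. The only remaining thing to check is the bookkeeping: the single condition $\gamma \le 1/(L+2\tau)$ must simultaneously deliver $2\gamma \le 2/L$ (for $\Psf_1$) and $2\gamma\tau \le 1$ (for $\Psf_2$), which it does since $L,\tau > 0$; in particular the endpoint $\gamma = 1/(L+2\tau)$ is admissible.
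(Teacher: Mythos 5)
Your proof is correct, and it takes a genuinely different route from the paper's. The paper does not argue nonexpansiveness from scratch: it simply invokes part (c) of the proof of Theorem 1 in the supplement of the BC-RED paper~\cite{Sun.etal2019a} (specialized to $\Usf=\Isf$), so the actual mechanics live in an external reference. You instead give a self-contained, elementary argument: writing $\Psf=\tfrac12(\Isf-2\gamma\nabla g)+\tfrac12\bigl((1-2\gamma\tau)\Isf+2\gamma\tau\Dsf_\sigma\bigr)$, handling the denoiser half as a convex combination of nonexpansive maps and the gradient half via Baillon--Haddad cocoercivity, with the single condition $\gamma\le 1/(L+2\tau)$ covering both $2\gamma\le 2/L$ and $2\gamma\tau\le 1$. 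This buys transparency and independence from the cited supplement; what the paper's citation route buys is more than nonexpansiveness --- the same external argument also supplies the quantitative contraction-type bound with the $-\bigl(\gamma/(L+2\tau)\bigr)\|\Gsf(\xbm^{k-1})\|_2^2$ term used in \eqref{Eq:FistStochBound}, which plain nonexpansiveness alone would not give (though for the proposition as stated your conclusion is exactly what is claimed). Two small points of care: the displayed relation for $\Isf-2\gamma\nabla g$ should be an inequality rather than an equality, since the cross term is only bounded (not evaluated) by cocoercivity; and Baillon--Haddad requires convexity of $g$, which is not listed in the proposition's hypotheses, so you are tacitly importing Assumption~\ref{As:DataFitConvexity}(a) --- legitimate in the context of Theorem~\ref{Thm:ConvThm1}, and the paper's own citation carries the same implicit reliance, but it is worth stating explicitly.
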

\begin{proof}
The proposition is a direct result of the part (c) of the proof of Theorem 1 (Section A) in the Supplementary Material of~\cite{Sun.etal2019a} by setting $\Usf=\Usf^\Tsf=\Isf$ and $\Gsf_i = \Gsf$, which corresponds to the full-gradient RED algorithm of \eqref{Eq:REDupdate}.
\end{proof}

\noindent
Now we prove Theorem~\ref{Thm:ConvThm1} in the paper. Consider a single iteration $\xbm^k = \Psfhat(\xbm^{k-1})$, then we can write for any $\xbm^\ast\in\zer(\Gsf)$ that
\begin{align}
\label{Eq:FistStochBound}
\nonumber&\|\xbm^k - \xbmast\|_2^2 = \|\Psfhat(\xbm^{k-1})-\Psf(\xbmast)\|_2^2\\
\nonumber&= \|\Psfhat(\xbm^{k-1})-\Psf(\xbm^{k-1})+\Psf(\xbm^{k-1})-\Psf(\xbmast)\|_2^2 \\
\nonumber&= \|\Psf(\xbm^{k-1})-\Psf(\xbmast)\|_2^2 + \|\Psfhat(\xbm^{k-1})-\Psf(\xbm^{k-1})\|_2^2 \\
\nonumber&\quad\quad + 2(\Psfhat(\xbm^{k-1})-\Psf(\xbm^{k-1}))^\Tsf(\Psf(\xbm^{k-1})-\Psf(\xbmast)) \\
&\leq \|\xbm^{k-1}-\xbmast\|_2^2 - \left(\frac{\gamma}{L+2\tau}\right)\|\Gsf(\xbm^{k-1})\|_2^2 \\
\nonumber&\quad\quad + \|\Psfhat(\xbm^{k-1})-\Psf(\xbm^{k-1})\|_2^2 \\
\nonumber&\quad\quad + 2\|\Psfhat(\xbm^{k-1})-\Psf(\xbm^{k-1})\|_2 \cdot \|\Psf(\xbm^{k-1})-\Psf(\xbmast)\|_2,
\end{align}
where we use the Cauchy-Schwarz inequality and adapt the bound (14) in the part (d) of the proof of Theorem 1 (Section A) in the Supplementary Material of~\cite{Sun.etal2019a} by setting $\Usf=\Usf^\Tsf=\Isf$ and $\Gsf_i = \Gsf$. According to Assumption~\ref{As:NonemptySet} and Proposition~\ref{Prp:NonexpansiveP}, we have
\begin{equation}
\label{Eq:NonexpansiveFromInitial}
\|\Psf(\xbm^{k-1})-\Psf(\xbmast)\|_2 \leq \|\xbm^{k-1}-\xbmast\|_2 \leq R_0.
\end{equation}
Additionally, by using Jensen's inequality, we can have for all $\xbm \in \R^n$ that
\begin{align}
\label{Eq:JensenSimplification}
\E&\left[\|\Psf(\xbm)-\Psfhat(\xbm)\|_2\right] = \E\left[\sqrt{\|\Psf(\xbm)-\Psfhat(\xbm)\|_2^2}\right] \nonumber \\
&\leq \sqrt{\E\left[\|\Psf(\xbm)-\Psfhat(\xbm)\|_2^2\right]} \leq \frac{\gamma \nu}{\sqrt{B}}.
\end{align}
By rearranging and taking a conditional expectation of~\eqref{Eq:FistStochBound} and using these bounds, we can obtain
\begin{align*}
\E&\left[\|\xbm^k-\xbmast\|_2^2 - \|\xbm^{k-1}-\xbmast\|_2^2 \mid \xbm^{k-1}\right] \\
\nonumber&\leq \frac{2\gamma \nu}{\sqrt{B}}R_0 + \frac{\gamma^2 \nu^2}{B} - \left(\frac{\gamma}{L+2\tau}\right)\|\Gsf(\xbm^{k-1})\|_2^2,
\end{align*}
which can be reorganized as
\begin{align*}
\|\Gsf(\xbm^{k-1})&\|_2^2 \leq \left(\frac{L+2\tau}{\gamma}\right)\Big[\frac{\gamma^2\nu^2}{B} + \frac{2\gamma\nu}{\sqrt{B}}R_0  \\
&+\E\left[\|\xbm^{k-1}-\xbmast\|_2^2 - \|\xbm^k-\xbmast\|_2^2 \mid \xbm^{k-1}\right]\Big].
\end{align*}
By averaging the inequality over $t \geq 1$ iterations, taking the total expectation, and dropping the last term, we obtain
\begin{align*}
\E&\left[\frac{1}{t}\sum_{k = 1}^t \|\Gsf(\xbm^{k-1})\|_2^2\right] \\
&\leq \frac{L+2\tau}{\gamma} \left[\frac{\gamma^2 \nu^2}{B} + \frac{2\gamma \nu }{\sqrt{B}}R_0 + \frac{R_0^2}{t}\right]
\end{align*}
where we apply the law of total expectation and Assumption~\ref{As:NonemptySet}. This establishes the Theorem~\ref{Thm:ConvThm1}. 




{\small
\bibliographystyle{ieee}
\bibliography{egbib}

\begin{thebibliography}{10}\itemsep=-1pt

\bibitem{Afonso.etal2010}
M.~V. Afonso, J.~M.Bioucas-Dias, and M.~A.~T. Figueiredo.
\newblock Fast image recovery using variable splitting and constrained
  optimization.
\newblock {\em IEEE Trans. Image Process.}, 19(9):2345--2356, September 2010.

\bibitem{Bauschke.Combettes2017}
H.~H. Bauschke and P.~L. Combettes.
\newblock {\em Convex Analysis and Monotone Operator Theory in Hilbert Spaces}.
\newblock Springer, 2 edition, 2017.

\bibitem{Beck.Teboulle2009a}
A.~Beck and M.~Teboulle.
\newblock Fast gradient-based algorithm for constrained total variation image
  denoising and deblurring problems.
\newblock {\em IEEE Trans. Image Process.}, 18(11):2419--2434, November 2009.

\bibitem{Bect.etal2004}
J.~Bect, L.~Blanc-Feraud, G.~Aubert, and A.~Chambolle.
\newblock A $\ell_1$-unified variational framework for image restoration.
\newblock In {\em Proc. Euro. Conf. Comp. Vis. (ECCV)}, volume 3024, pages
  1--13, New York, 2004.

\bibitem{Bernstein.etal2018}
J.~Bernstein, Y.-X. Wang, K.~Azizzadenesheli, and A.~Anandkumar.
\newblock {signSGD}: {C}ompressed optimization for non-convex problems.
\newblock In {\em Proc. 35th Int. Conf. Machine Learning ({ICML})}, volume~80,
  pages 560--569, Stockholm, Sweden, July 2018.

\bibitem{Bottou.Bousquet2007}
L.~Bottou and O.~Bousquet.
\newblock The tradeoffs of large scale learning.
\newblock In {\em Proc. Advances in Neural Information Processing Systems
  ({NIPS})}, pages 161--168, Vancouver, BC, Canada, December 3-6, 2007.

\bibitem{Boyd.etal2011}
S.~Boyd, N.~Parikh, E.~Chu, B.~Peleato, and J.~Eckstein.
\newblock Distributed optimization and statistical learning via the alternating
  direction method of multipliers.
\newblock {\em Foundations and Trends in Machine Learning}, 3(1):1--122, 2011.

\bibitem{Brifman.etal2016}
A.~Brifman, Y.~Romano, and M.~Elad.
\newblock Turning a denoiser into a super-resolver using plug and play priors.
\newblock In {\em Proc. {IEEE} Int. Conf. Image Proc. ({ICIP})}, pages
  1404--1408, Phoenix, AZ, USA, September 25-28, 2016.

\bibitem{Buzzard.etal2017}
G.~T. Buzzard, S.~H. Chan, S.~Sreehari, and C.~A. Bouman.
\newblock Plug-and-play unplugged: {O}ptimization free reconstruction using
  consensus equilibrium.
\newblock {\em SIAM J. Imaging Sci.}, 11(3):2001--2020, September 2018.

\bibitem{Candes.etal2006}
E.~J. Cand{\`e}s, J.~Romberg, and T.~Tao.
\newblock Robust uncertainty principles: Exact signal reconstruction from
  highly incomplete frequency information.
\newblock {\em IEEE Trans. Inf. Theory}, 52(2):489--509, February 2006.

\bibitem{Candes.etal2012}
E.~J. Cand{\`{e}}s, T.~Strohmer, and V.~Voroninski.
\newblock {PhaseLift}: Exact and stable signal recovery from magnitude
  measurements via convex programming.
\newblock {\em Communications on Pure and Applied Mathematics},
  66(8):1241--1274, 2013.

\bibitem{Chan.etal2016}
S.~H. Chan, X.~Wang, and O.~A. Elgendy.
\newblock Plug-and-play {ADMM} for image restoration: Fixed-point convergence
  and applications.
\newblock {\em IEEE Trans. Comp. Imag.}, 3(1):84--98, March 2017.

\bibitem{Dabov.etal2007}
K.~Dabov, A.~Foi, V.~Katkovnik, and K.~Egiazarian.
\newblock Image denoising by sparse {3-D} transform-domain collaborative
  filtering.
\newblock {\em IEEE Trans. Image Process.}, 16(16):2080--2095, August 2007.

\bibitem{Daubechies.etal2004}
I.~Daubechies, M.~Defrise, and C.~D. Mol.
\newblock An iterative thresholding algorithm for linear inverse problems with
  a sparsity constraint.
\newblock {\em Commun. Pure Appl. Math.}, 57(11):1413--1457, November 2004.

\bibitem{Donoho2006}
D.~L. Donoho.
\newblock Compressed sensing.
\newblock {\em IEEE Trans. Inf. Theory}, 52(4):1289--1306, April 2006.

\bibitem{Eckstein.Bertsekas1992}
J.~Eckstein and D.~P. Bertsekas.
\newblock On the {D}ouglas-{R}achford splitting method and the proximal point
  algorithm for maximal monotone operators.
\newblock {\em Mathematical Programming}, 55:293--318, 1992.

\bibitem{Elad.Aharon2006}
M.~Elad and M.~Aharon.
\newblock Image denoising via sparse and redundant representations over learned
  dictionaries.
\newblock {\em IEEE Trans. Image Process.}, 15(12):3736--3745, December 2006.

\bibitem{Figueiredo.Nowak2001}
M.~A.~T. Figueiredo and R.~D. Nowak.
\newblock Wavelet-based image estimation: An empirical {B}ayes approach using
  {J}effreys' noninformative prior.
\newblock {\em IEEE Trans. Image Process.}, 10(9):1322--1331, September 2001.

\bibitem{Figueiredo.Nowak2003}
M.~A.~T. Figueiredo and R.~D. Nowak.
\newblock An {EM} algorithm for wavelet-based image restoration.
\newblock {\em IEEE Trans. Image Process.}, 12(8):906--916, August 2003.

\bibitem{Fletcher.etal2018}
A.~K. Fletcher, P.~Pandit, S.~Rangan, S.~Sarkar, and P.~Schniter.
\newblock Plug-in estimation in high-dimensional linear inverse problems: A
  rigorous analysis.
\newblock In {\em Proc. Advances in Neural Information Processing Systems
  (NIPS)}, pages 7451--7460, Montreal, QC, Canada, December 2-8, 2018.

\bibitem{Ghadimi.Lan2016}
S.~Ghadimi and G.~Lan.
\newblock Accelerated gradient methods for nonconvex nonlinear and stochastic
  programming.
\newblock {\em Math. Program. Ser. A}, 156(1):59--99, March 2016.

\bibitem{Kamilov2017}
U.~S. Kamilov.
\newblock A parallel proximal algorithm for anisotropic total variation
  minimization.
\newblock {\em IEEE Trans. Image Process.}, 26(2):539--548, February 2017.

\bibitem{Kamilov.etal2017}
U.~S. Kamilov, H.~Mansour, and B.~Wohlberg.
\newblock A plug-and-play priors approach for solving nonlinear imaging inverse
  problems.
\newblock {\em IEEE Signal. Proc. Let.}, 24(12):1872--1876, December 2017.

\bibitem{Kim.etal2013}
D.~{Kim}, D.~{Pal}, J.~{Thibault}, and J.~A. {Fessler}.
\newblock Accelerating ordered subsets image reconstruction for {X}-ray {CT}
  using spatially nonuniform optimization transfer.
\newblock {\em IEEE Trans. Med. Imag.}, 32(11):1965--1978, Nov 2013.

\bibitem{Mataev.etal2019}
G.~Mataev, M.~Elad, and P.~Milanfar.
\newblock Deepred: Deep image prior powered by {RED}.
\newblock 2019.
\newblock arXiv:1903.10176 [cs.CV].

\bibitem{Meinhardt.etal2017}
T.~Meinhardt, M.~Moeller, C.~Hazirbas, and D.~Cremers.
\newblock Learning proximal operators: {U}sing denoising networks for
  regularizing inverse imaging problems.
\newblock In {\em Proc. IEEE Int. Conf. Comp. Vis. (ICCV)}, pages 1799--1808,
  Venice, Italy, October 22-29, 2017.

\bibitem{Metzler.etal2018}
C.~Metzler, P.~Schniter, A.~Veeraraghavan, and R.~Baraniuk.
\newblock pr{D}eep: Robust phase retrieval with a flexible deep network.
\newblock In {\em Proc. 35th Int. Conf. Machine Learning (ICML)}, pages
  3501--3510, Stockholmsm{\"a}ssan, Stockholm Sweden, 10--15 Jul 2018.

\bibitem{Metzler.etal2016}
C.~A. Metzler, A.~Maleki, and R.~G. Baraniuk.
\newblock From denoising to compressed sensing.
\newblock {\em IEEE Trans. Inf. Theory}, 62(9):5117--5144, September 2016.

\bibitem{Moreau1965}
J.~J. Moreau.
\newblock Proximit{\'e} et dualit{\'e} dans un espace {H}ilbertien.
\newblock {\em Bull. Soc. Math. France}, 93:273--299, 1965.

\bibitem{Ng.etal2010}
M.~K. Ng, P.~Weiss, and X.~Yuan.
\newblock Solving constrained total-variation image restoration and
  reconstruction problems via alternating direction methods.
\newblock {\em SIAM J. Sci. Comput.}, 32(5):2710--2736, August 2010.

\bibitem{Ono2017}
S.~Ono.
\newblock Primal-dual plug-and-play image restoration.
\newblock {\em IEEE Signal. Proc. Let.}, 24(8):1108--1112, 2017.

\bibitem{Parikh.Boyd2014}
N.~Parikh and S.~Boyd.
\newblock Proximal algorithms.
\newblock {\em Foundations and Trends in Optimization}, 1(3):123--231, 2014.

\bibitem{Reehorst.Schniter2019}
E.~T. Reehorst and P.~Schniter.
\newblock Regularization by denoising: {C}larifications and new
  interpretations.
\newblock {\em IEEE Trans. Comput. Imag.}, 5(1):52--67, Mar. 2019.

\bibitem{Romano.etal2017}
Y.~Romano, M.~Elad, and P.~Milanfar.
\newblock The little engine that could: {R}egularization by denoising ({RED}).
\newblock {\em SIAM J. Imaging Sci.}, 10(4):1804--1844, 2017.

\bibitem{Rudin.etal1992}
L.~I. Rudin, S.~Osher, and E.~Fatemi.
\newblock Nonlinear total variation based noise removal algorithms.
\newblock {\em Physica D}, 60(1--4):259--268, November 1992.

\bibitem{Ryu.etal2019}
E.~K. Ryu, J.~Liu, S.~Wang, X.~Chen, Z.~Wang, and W.~Yin.
\newblock Plug-and-play methods provably converge with properly trained
  denoisers.
\newblock In {\em Proc. 36th Int. Conf. Machine Learning (ICML)}, pages
  5546--5557, 2019.

\bibitem{Sreehari.etal2016}
S.~Sreehari, S.~V. Venkatakrishnan, B.~Wohlberg, G.~T. Buzzard, L.~F. Drummy,
  J.~P. Simmons, and C.~A. Bouman.
\newblock Plug-and-play priors for bright field electron tomography and sparse
  interpolation.
\newblock {\em IEEE Trans. Comput. Imaging}, 2(4):408--423, December 2016.

\bibitem{Starck.etal2002}
J.~L. Starck, E.~Pantin, and F.~Murtagh.
\newblock Deconvolution in astronomy: {A} review.
\newblock {\em Pub. Astron. Soc. Pacific}, 114(800):1051--1069, October 2002.

\bibitem{Sun.etal2019a}
Y.~Sun, J.~Liu, and U.~S. Kamilov.
\newblock Block coordinate regularization by denoising.
\newblock In {\em Proc. Advances in Neural Information Processing Systems 32},
  Vancouver, BC, Canada, Dec. 2019.

\bibitem{Sun.etal2018a}
Y.~Sun, B.~Wohlberg, and U.~S. Kamilov.
\newblock An online plug-and-play algorithm for regularized image
  reconstruction.
\newblock {\em IEEE Trans. Comput. Imaging}, 2019.

\bibitem{Sun.etal2018b}
Y.~{Sun}, S.~{Xu}, Y.~{Li}, L.~{Tian}, B.~{Wohlberg}, and U.~S. {Kamilov}.
\newblock Regularized fourier ptychography using an online plug-and-play
  algorithm.
\newblock In {\em Proc. {IEEE} Int. Conf. Acoustics, Speech and Signal Process.
  ({ICASSP})}, pages 7665--7669, Brighton, UK, May 12-17, 2019.

\bibitem{Teodoro.etal2017}
A.~Teodoro, J.~M. Bioucas-Dias, and M.~Figueiredo.
\newblock Scene-adapted plug-and-play algorithm with convergence guarantees.
\newblock In {\em Proc. IEEE Int. Workshop on Machine Learning for Signal
  Processing}, pages 1--6, Tokyo, Japan, September 25-28, 2017.

\bibitem{Teodoro.etal2016}
A.~M. Teodoro, J.~M. Biocas-Dias, and M.~A.~T. Figueiredo.
\newblock Image restoration and reconstruction using variable splitting and
  class-adapted image priors.
\newblock In {\em Proc. {IEEE} Int. Conf. Image Proc. ({ICIP})}, pages
  3518--3522, Phoenix, AZ, USA, September 25-28, 2016.

\bibitem{Tian.Waller2015}
L.~Tian and L.~Waller.
\newblock {3D} intensity and phase imaging from light field measurements in an
  {LED} array microscope.
\newblock {\em Optica}, 2:104--111, 2015.

\bibitem{Tibshirani1996}
R.~Tibshirani.
\newblock Regression and selection via the lasso.
\newblock {\em J. R. Stat. Soc. Series B (Methodological)}, 58(1):267--288,
  1996.

\bibitem{Venkatakrishnan.etal2013}
S.~V. Venkatakrishnan, C.~A. Bouman, and B.~Wohlberg.
\newblock Plug-and-play priors for model based reconstruction.
\newblock In {\em Proc. IEEE Global Conf. Signal Process. and Inf. Process.
  ({GlobalSIP})}, pages 945--948, Austin, TX, USA, December 3-5, 2013.

\bibitem{Xu.etal2018}
X.~{Xu} and U.~S. {Kamilov}.
\newblock Signprox: One-bit proximal algorithm for nonconvex stochastic
  optimization.
\newblock In {\em {IEEE} Int. Conf. Acoustics, Speech and Signal Process.
  ({ICASSP})}, pages 7800--7804, Brighton, UK, May 2019.

\bibitem{Zhang.etal2017}
K.~Zhang, W.~Zuo, Y.~Chen, D.~Meng, and L.~Zhang.
\newblock Beyond a {G}aussian denoiser: {R}esidual learning of deep {CNN} for
  image denoising.
\newblock {\em IEEE Trans. Image Process.}, 26(7):3142--3155, July 2017.

\bibitem{Zhang.etal2017a}
K.~Zhang, W.~Zuo, S.~Gu, and L.~Zhang.
\newblock Learning deep {CNN} denoiser prior for image restoration.
\newblock In {\em Proc. {IEEE} Conf. Computer Vision and Pattern Recognition
  ({CVPR})}, pages 3929--3938, Honolulu, USA, July 21-26, 2017.

\end{thebibliography}
}

\end{document}